\newtheorem{theorem}{Theorem}
\newtheorem{definition}{Definition}
\newtheorem{remark}{Remark}
\newtheorem{assumption}{Assumption}
\newtheorem{problem}{Problem}
\DeclarePairedDelimiter{\diagfences}{(}{)}
\newcommand{\diag}{\operatorname{diag}\diagfences}
\newcommand\ignore[1]{{}}
\algrenewcommand\algorithmicrequire{\textbf{Input:}}
\algrenewcommand\algorithmicensure{\textbf{Output:}}
\begin{document}

\author{Cong Li,~\IEEEmembership{}
        Yongchao Wang, ~\IEEEmembership{} Fangzhou Liu{$^*$}, ~\IEEEmembership{}
        and~Martin Buss
\thanks{C. Li, Y. Wang, F. Liu, and M.Buss are with the Chair of Automatic Control Engineering, Technical University of Munich, Theresienstr. 90,80333, Munich, Germany e-mail: \{cong.li, yongchao.wang, fangzhou.liu, mb\}@tum.de.}
}
\title{Off Policy Risk Sensitive Reinforcement Learning Based Optimal Tracking Control with Prescribe Performances}
\maketitle

\begin{abstract}
    An off policy reinforcement learning based control strategy is developed for the optimal tracking control problem to achieve the prescribed performance of full states during the learning process. The optimal tracking control problem is converted as an optimal regulation problem based on an auxiliary system. The requirements of prescribed performances are transformed into constraint satisfaction problems that are dealt by risk sensitive state penalty terms under an optimization framework. To get approximated solutions of the Hamilton Jacobi Bellman equation, an off policy adaptive critic learning architecture is developed by using current data and experience data together. By using experience data, the proposed weight estimation update law of the critic learning agent guarantees weight convergence to the actual value. This technique enjoys practicability comparing with common methods that need to incorporate external signals to satisfy the persistence of excitation condition for weight convergence. The proofs of stability and weight convergence of the closed loop system are provided.
    Simulation results reveal the validity of the proposed off policy risk sensitive reinforcement learning based control strategy.
\end{abstract}

\begin{IEEEkeywords}
Off policy reinforcement learning, Adaptive dynamic programming, Optimal tracking control, Prescribed performance
\end{IEEEkeywords}
\IEEEpeerreviewmaketitle

\section{Introduction}
\IEEEPARstart{T}{he} optimal tracking control problem (OTCP) has always been the focus of the control community, in which both tracking error variations and control energy expenditures serve as performance indexes to be optimized. (see, e.g. \cite{bertsekas1995dynamic,lewis2012optimal}, and the references therein). The control strategy for the OTCP is derived from solving an algebraic Riccati equation (ARE) of a linear system, or a Hamilton Jacobi Bellman (HJB) equation of a nonlinear system. However, it is well known that it is not easy to solve the ARE or HJB directly. In recent days, adaptive (approximate) dynamic programming (ADP) emerges as an efficient reinforcement learning (RL) based framework to get approximated solutions of the ARE or HJB equation based on an actor-critic artificial neural network (NN) approximation scheme. 
Despite successful applications of the ADP learning framework for the OTCP, the performance guarantee of full states during the whole learning process has yet to be established. In ADP related works, although the tracking error over a long horizon is minimized during the learning process, the instantaneous tracking performance is ignored and may cause safety issues. For example, the weight adaption in the initial training period might generate a harmful overshoot that is higher than an acceptable threshold, and it may lead to a loss of stability or damage to hardwares. 
Besides, the existing OTC related works mainly achieve the tracking error convergence to an uncertain residual set, whose size relies on hyper-parameters chosen. For a given task with predefined requirements of the tracking error, we prefer a quantized guarantee for the final achievable tracking performance.
Taking the above into consideration, we aim to develop an effective control strategy to solve the OTCP while guarantees the tracking performance during the whole learning process.
\subsection{Prior and related works}
Considering the tracking error for the OTCP, prescribed performance functions (PPFs) are firstly proposed in \cite{bechlioulis2010prescribed,bechlioulis2008robust} to guarantee that the tracking error converges to an arbitrarily small residual set, convergence rate is no less than a predefined value, and a maximum overshoot is less than a prespecified constant. Then, a PPF based system transformation method is often combined with the backstepping technique to achieve desired performances \cite{guo2018neural,huang2018neuro}. Besides, PPFs serve to construct barrier Lyapunov functions (BLFs) to enforce satisfaction of prescribed performances (PPs) under a recursive controller design process \cite{yang2020adaptive}.  
However, control energy expenditures are not considered in these works. 
Later, PPFs are incorporated into an optimization framework to consider performance criteria in terms of both tracking errors and energy expenditures \cite{dong2020optimal,wang2017dynamic}. However, to the best of our knowledge, most of PP related works under an optimization framework only focus on the strict-feedback system or the pure-feedback system, and just achieve PPs of output states based on a PPF guided system transformation technique. 
For certain practical applications, it is desirable to guarantee performance for full states of the investigated system. 
For example, we prefer a robot manipulator to track the reference trajectory precisely (angular position errors) and smoothly (angular velocity errors).
\subsection{Contribution}
Our work builds on the problem transformation method illustrated in \cite{kamalapurkar2015approximate}, where the OTCP of the investigated system is converted into an equivalent stationary optimal regulation problem of an auxiliary system. 
The PPs of full states are interpreted as  tracking error constraints, and risk sensitive state penalty (RS-SP) terms from our prior work \cite{li2020online} are developed to tackle these constraints. 
The resulting PPF based RS-SP terms are incorporated into the cost function to transform the constrained optimization problem into an unconstrained optimization problem. Based on the off policy adaptive critic learning architecture, the approximated optimal control strategy derived by solving the unconstrained optimal regulation problem achieves the OTC with PPs.
Comparing with existing works, the contributions are summarized as follows:
(a) The proposed RS-SP terms based optimization framework guarantees PP for full states of a general nonlinear system;
(b) The off policy RL based control strategy developed in our prior work \cite{li2020online} is applied in the trajectory tracking scenario to test its effectiveness. 
The avoiding of incorporating external signals to satisfy the persistence of excitation (PE) condition to achieve the weight convergence enables the satisfaction of PP of full states feasible. 
Otherwise, in the initial learning period, the real trajectory may in a random form under the influence of external signals, which may put the robot manipulator at an unsafe state.
\subsection{Paper organization}
Section 2 introduces preliminaries, the problem formulation of the prescribed performance optimal tracking control problem, and the problem transformation.
Section 3 briefly elucidates the off policy RL based control strategy to solve the transformed risk sensitive optimal regulation problem, followed by simulation results shown in Section 4 to test effectiveness of the proposed strategy. Conclusions are provided in Section 5.

\emph{Notations:} Throughout this paper, $\mathbb{R}_{+}$ denotes the set of positive real numbers; 
$\mathbb{R}^{n}$ is the Euclidean space of $n$-dimensional real vector; 
$\mathbb{R}^{n \times m}$ is the Euclidean space of $n \times m$ real matrices; 
$I_{m \times m}$ represents the identity matrix with dimension $m \times m$; $0_{n \times m}$ denotes the $n \times m$ zero matrix; 
$\lambda_{\min}(M)$ and $\lambda_{\max}(M)$ are the maximum and minimum eigenvalues of a symmetric matrix $M$, respectively; 
$\diag {a_{1},...,a_{n}}$ is the $n \times n$ diagonal matrix with the value of main diagonal as $a_{1},...,a_{n}$.
The $i$th entry of a vector $x = [x_{1},...,x_{n}]^{\top}\in \mathbb{R}^{n}$ is denoted by $x_{i}$, and $\left\| x \right\| = \sqrt{\sum_{i=1}^{N}|x_{i}|^2}$ is the Euclidean norm of the vector $x$.
The $ij$th entry of a matrix $A \in \mathbb{R}^{m \times n}$ is denoted by $a_{ij}$, and $\left\|A\right\| = \sqrt{\sum_{i=1}^{m}\sum_{j=1}^{n}|a_{ij}|^2}$ is the Frobenius norm of the matrix $A$. For notational brevity, time-dependence is suppressed without causing ambiguity. 
\section{Preliminaries and problem formulation}
Consider the following general nonlinear dynamics:
\begin{equation} \label{original sys}
    \dot{x} = f(x)+g(x)u(x),
\end{equation}
where $x \in \mathbb{R}^{n}$ and $u(x) \in \mathbb{R}^{m}$ are states and inputs of the system. $f(x) : \mathbb{R}^{n} \to \mathbb{R}^{n}$, $g(x) : \mathbb{R}^{n} \to \mathbb{R}^{n \times m}$ are the known drift dynamics and input dynamics, respectively. 
\begin{assumption}\cite{kamalapurkar2015approximate} \label{bound of f g}
    The drift dynamics $f$ is Lipschitz continuous and $f(0) = 0$. 
    The input dynamics $g$ is bounded, and its inverse function $g^{+}=(g^{\top}g)^{-1}g^{\top} \in \mathbb{R}^{m}$ is bounded and Lipschitz continuous.
\end{assumption}
The control objective of this paper is to track the reference trajectory $x_r \in \mathbb{R}^{n}$ while the tracking error $e = x-x_r \in \mathbb{R}^{n}$ is guaranteed to satisfy the predefined performance criteria in terms of convergence rate, maximum overshoot, and residual set. These requirements for the tracking error can be reflected by the PPF that is defined as follows.
\begin{definition} [Prescribed performance function]  \cite{bechlioulis2008robust} \label{defination PF}
A smooth function $\rho: \mathbb{R}_{+} \to \mathbb{R}_{+}$ is called a prescribed performance function if: $\rho(t)$ is positive and decreasing, and $\lim_{t \to \infty} \rho(t) = \rho_{\infty} > 0$.
\end{definition}
To satisfy the above illustrated tracking performance, a PPF from \cite{bechlioulis2008robust} is adopted here.
\begin{equation} \label{PF example}
 \rho(t) = (\rho_0 - \rho_{\infty}) e^{-lt}+\rho_{\infty}.
\end{equation}
Based on the given PPF \eqref{PF example}, the goal of this paper is formulated as Problem \ref{PPTC}.
\begin{problem} [Prescribed performance optimal tracking control problem (PP-OTCP)] \label{PPTC}
   Given Assumption \ref{bound of f g}, design a control strategy $u(x)$ for the dynamics \eqref{original sys} to track the reference trajectory $x_r$ precisely. The control energy is minimized and the tracking error $e$ satisfies the desired performance as
    \begin{equation} \label{Performance function}
    -\alpha_i \rho_i (t) < e_i(t) < \alpha_i \rho_i (t), i = 1,\dots,n,
    \end{equation}
    where $\rho_i$ is the $i$th PPF in \eqref{PF example} that associates with the $i$th element of the tracking error $e_i$, the constant $\alpha_i \in \mathbb{R}_{+}$ is adjusted to denote desired performances.
\end{problem}
An intuitive explanation of \eqref{Performance function} is displayed in Fig.\ref{fig PF illustration}.
Assuming that $\left|e(0)\right|$ lies in the scope $[0,\alpha \rho_{0}]$, the final maximum residual set is represented by $[-\alpha \rho_{\infty},\alpha \rho_{\infty}]$, the maximum overshoot is confined to the scope $[- \alpha \rho_{0},\alpha \rho_{0}]$, and the convergence speed of $e(t)$ relates with the decreasing rate of $\rho(t)$ that is determined by the value of $l$. It should be clear that the parameter selection for \eqref{PF example} is determined by considering requirements of performance and safety together. 
For example, the determination of the scope $[-\alpha \rho_{0},\alpha \rho_{0}]$ for $e(t)$ should also take the limited working space into consideration.
\begin{figure}[!t]
    \centering
    \includegraphics[width=3.8in]{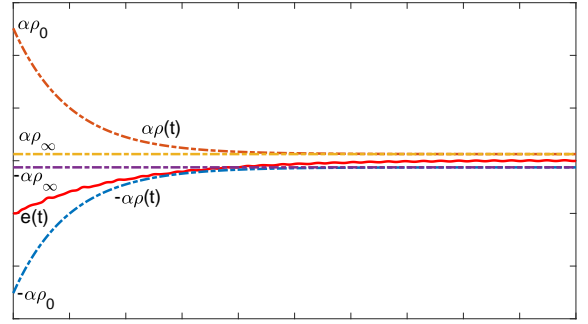}
    \caption{Graphical illustration of the relationship between the tracking error $e(t)$ and the PPF $\rho(t)$.}
    \label{fig PF illustration}
\end{figure}
\subsection{Problem transformation} \label{Problem trans}
To facilitate the problem transformation, the following assumption about the desired trajectory $y(x_r)$ is firstly given.
\begin{assumption}\cite{kamalapurkar2015approximate} \label{bound of xr}
    The reference trajectory follows $\dot{x}_r = y (x_r)$ where $y :\mathbb{R}^{n} \to \mathbb{R}^{n}$ is locally Lipschitz, and 
$x_r$ is bounded by $\left\| x_r \right\| \leq r \in \mathbb{R}_{+}$.
\end{assumption}
For the considered OTCP, a concatenated state $\eta\in \mathbb{R}^{2n}$ is introduced to transform it as an optimal regulation problem (ORP), which permits us to adopt RS-SP terms to tackle requirements of performances. The concatenated state $\eta$ is defined as 
\begin{equation} \label{concate state}
    \eta = [e,x_r]^{\top}.
\end{equation}
Calculating the time derivative of \eqref{concate state} yields
\begin{equation} \label{new system}
    \dot{\eta} = F(\eta) + G(\eta) \mu,
\end{equation}
where
$F(\eta) = \begin{bmatrix}
            f(e+x_r)-y(x_r)+g(e+x_r)\nu \\
             y_r(x_r) \\
        \end{bmatrix}: \mathbb{R}^{2n} \to \mathbb{R}^{2n}$,
$G(\eta) = \begin{bmatrix}
            g(e+x_r) \\
             0_{n \times m}  \\
        \end{bmatrix} : \mathbb{R}^{2n} \to \mathbb{R}^{2n \times m} $, and $\mu = u-\nu \in \mathbb{R}^{m}$.
The steady-state control policy $\nu$ with regard to the reference trajectory follows
\begin{equation} \label{control v PP}
    \nu = g^{+}(x_r) (y(x_r)-f(x_r)).
\end{equation}
Based on Assumption \ref{bound of f g}-\ref{bound of xr} and $f(0)=0$, we know that $F(0)=0$ and $F$ is locally Lipschitz. Based on the boundness of $g$, it is reasonable to conclude that $\left\|G\right\| \leq b_{G}$.

In order to solve Problem \ref{PPTC}, based on the auxiliary system \eqref{new system} and risk sensitive terms, the PP-OTCP in Problem \ref{PPTC} is equivalent to the risk sensitive optimal regulation problem shown in Problem \ref{ORP}.
\begin{problem} [risk sensitive optimal regulation problem (RS-ORP)] \label{ORP}
Given Assumption \ref{bound of f g}-\ref{bound of xr}, find a control policy $\mu(\eta)$ for the auxiliary system \eqref{new system} to minimize the following cost function
\begin{equation}\label{PP cost fuction 1}
    V(\eta) = \int_{t}^{\infty} r(\eta(\tau),\mu(\eta(\tau)))\,d\tau,
\end{equation}
where the constructive utility function $r(\eta,\mu(\eta)) = P(\eta)+\mu^{\top}R\mu$,
and the prescribed performance related penalty function is defined as 
\begin{equation}\label{performance penalty function}
    P(\eta) = \sum_{i=1}^{n} k_{i} \log \frac{\alpha^2_{i}}{\alpha^2_{i}-\zeta^2_{i}}+h_{i} \log \frac{\beta^2_{i}}{\beta^2_{i}-\delta^2_{i}},
\end{equation}
where $\zeta_i = e_i/\rho_i$, $\delta_i = x_{r_i}/\rho_i$, and $k_i$, $h_{i}$ are risk awareness parameters to be designed. 
\end{problem}
In Problem \ref{ORP}, the PPs for the tracking errors are interpreted as constraints that are tackled by PPF based RS-SP terms under an optimization framework. 
The working scheme of $ P(\eta)$ is displayed in Fig.\ref{PPFWorkingScheme}. Intuitively speaking,  $P(\eta)$ acts as barriers at the constraint boundaries defined by PPFs, and confines the tracking error remain in the region that satisfies desired performances.
\begin{figure}[!t]
    \centering
    \includegraphics[width=3.8in]{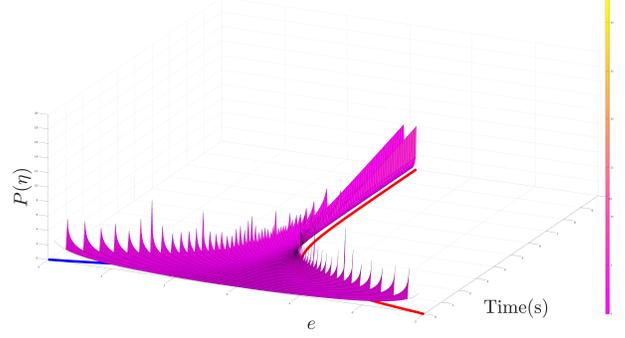}
    \caption{Graphical illustration of the working scheme of the PPF based RS-SP terms.}
    \label{PPFWorkingScheme}
\end{figure}
\begin{remark}
The construction of $P(\eta)$ is inspired by the barrier Lyapunov function (BLF) developed in \cite{tee2009barrier}. $P(\eta)$ is positive and approaches to infinity when the tracking error transgresses the boundaries of PPFs . The development of $P(\eta)$ allows us to consider the PPs for full states easily. 
It should be clear that the reference trajectory $x_r$ is also included into \eqref{performance penalty function}, which permits us to adopt the optimal value function as a Lyapunov function candidate to provide the stability proof. The incorporation of $x_r$ into \eqref{performance penalty function} results in an inevitable performance compromise problem, which can be dealt by setting the corresponding PPF as a loose one (a big $\beta_i$) with a sufficiently small $h_i$.
\end{remark}
\subsection{Hamilton Jacobi Bellman equation for auxiliary system}
Considering Problem \ref{ORP}, for any admissible control policies $\upsilon \in \Psi (\Omega)$ defined as \cite[Definition 1]{abu2005nearly}, the associated cost function is
    \begin{equation} \label{cost function for admissible policy}
    V^{\upsilon}(\eta) = \int_{t}^{\infty} r(\eta(\tau),\upsilon(\eta(\tau)))\,d\tau.
    \end{equation}
    Taking derivative of \eqref{cost function for admissible policy} with regard to $t$ yields the so-called nonlinear Lyapunov equation (LE),
    \begin{equation}\label{nonlinear Lyapunov equation}
    0 = r(\eta,\upsilon(\eta)) + \nabla V^{\top}(F(\eta)+G(\eta)\upsilon(\eta)),
    \end{equation}
    where the operation operator $\nabla$ denotes partial derivative with regard to $\eta$, i.e. $\partial / \partial \eta $.
    
    As for the RS-ORP in Problem \ref{ORP}, the associated optimal cost function is 
    \begin{equation} \label{optimal cost function}
    V^{*}(\eta) = \min_{\upsilon \in \Psi (\Omega)}\int_{t}^{\infty} r(\eta(\tau),\upsilon(\eta(\tau)))\,d\tau.
    \end{equation}
    Define the Hamiltonian as 
    \begin{equation}\label{Hamiltonian}
    \begin{aligned}
     H(\eta,\upsilon(\eta),\nabla V) = r(\eta,\upsilon(\eta)) + \nabla V^{T}(F(\eta)+G(\eta)\upsilon(\eta)).   
    \end{aligned}
    \end{equation}
    An infinitesimal version of \eqref{optimal cost function} is the so-called HJB equation and is written as the following form based on \eqref{Hamiltonian}
    \begin{equation} \label{HJB equation}
    0 = \min_{\upsilon \in \Psi (\Omega)} [H(\eta,\upsilon(\eta),\nabla V^{*})]. 
    \end{equation}
    Assuming that the minimum of \eqref{optimal cost function} exits and is unique. A closed form expression for the optimal control $\upsilon^*(x)$ can be derived as
    \begin{equation} \label{optimal u}
    \upsilon^*(x) = - \frac{1}{2} R^{-1} G^{\top}(\eta)\nabla V^{*}.
    \end{equation}
    Inserting \eqref{optimal u}  into the LE equation \eqref{nonlinear Lyapunov equation}, we can get the HJB equation as
    \begin{equation} \label{HJB equation expansion}
        \begin{aligned}
            0 &= \nabla V^{*}F(\eta) + P(\eta)-\frac{1}{4}\nabla {V^{*}}^{\top}G(\eta)R^{-1}G^{\top}(\eta)\nabla V^{*}.
        \end{aligned}
    \end{equation}
 \section{off policy adaptive critic learning}
The HJB equation \eqref{HJB equation expansion} is a nonlinear differential equation in terms of $\nabla V^*$. This nonlinear nature makes it extremely difficult to solve directly.
In this section, we follow the off policy adaptive critic learning technique developed in our prior work \cite{li2020online} to get the approximated solution by using the current data and experience data together. The usage of experience data to achieve weight convergence without incorporating external signals enable us to achieve PPs for full states and apply the control strategy into real applications.
\subsection{Value function approximation}
    According to the Weierstrass high-order approximation theorem \cite{finlayson2013method}, it is reasonable to conclude that there exists $W^{*}\in  \mathbb{R}^{N}$ such that the value function can be approximated as  
    \begin{equation}\label{optimal V approximation}
    V^{*}(\eta) = {W^*}^{\top} \Phi(\eta) + \epsilon(\eta),
    \end{equation}
    where $\epsilon(\eta)$ is the approximation error. Activation functions $\Phi(\eta) :  \mathbb{R}^{2n} \to \mathbb{R}^{N}$ of the artificial NN can be appropriately selected so that $V^{*}(\eta)$ and its derivative 
    \begin{equation}\label{optimal dV approximation}
    \nabla V^{*}(\eta) = \nabla \Phi^{\top}(\eta)W^{*}+\nabla\epsilon(\eta),
    \end{equation}
     are both uniformly approximated. In the above, $N$ is the number of NN activation functions. As $N \to \infty$, both $\epsilon(\eta) \to 0$ and $\nabla\epsilon(\eta) \to 0$ uniformly. Without loss of generality, the following assumption is given in this paper.
     \begin{assumption}\label{bound of NN issues}
          The approximation error of NNs is assumed to be bounded by $\left\| \epsilon(\eta)  \right\| \leq b_{\epsilon}$, and its derivative follows  $\left\| \nabla\epsilon(\eta)  \right\| \leq b_{\epsilon \eta}$. It is assumed that activation functions and their gradients are also bounded, i.e. $\left\| \Phi(\eta) \right\| \leq b_\Phi$ and $\left\| \nabla\Phi(\eta) \right\| \leq b_{\Phi \eta}$.
     \end{assumption}
    For the fixed admissible control policy $\mu(\eta)$, inserting \eqref{optimal dV approximation} into the corresponding LE \eqref{nonlinear Lyapunov equation}  yields
    \begin{equation}\label{approximation Lyapunov equation}
        {W^*}^{\top}\nabla \Phi(F(\eta)+G(\eta)\mu (\eta))+r(\eta,\mu(\eta)) = \epsilon_{h},
    \end{equation}
    where the residual error is defined as $\epsilon_{h} = -(\nabla \epsilon )^{\top}(F(\eta)+G(\eta)\mu (\eta))$. 
    Under the Lipschitz assumption on dynamics given in Assumption \ref{bound of f g}, the boundness of the residual error is denoted as $\left\| \epsilon_{h} \right\| \leq b_{\epsilon_{h}}$.
    
    Denoting $\Theta = r(\eta,\mu(\eta)) \in \mathbb{R}_{+}$ and $Y = \nabla \Phi(F(\eta)+G(\eta)\mu (\eta)) \in \mathbb{R}^{N}$, \eqref{approximation Lyapunov equation} is rewritten as 
    \begin{equation}\label{LIP Lyapunov equation}
    \Theta = -{W^*}^{\top}Y+\epsilon_{h}.
    \end{equation}
    By observing \eqref{LIP Lyapunov equation}, we know that the NN parameterized LE can be rewritten in a LIP form, which enables us to design an efficient weight estimation update law for $W^*$ with weight convergence guarantee.
    \subsection{Off policy reinforcement learning} \label{OPRL control strategy}
    Since the ideal critic weight $W^{*}$ in \eqref{LIP Lyapunov equation} is unknown, let $\hat{W}$ denote the estimated value of $W^{*}$. The output of the critic learning agent is 
    \begin{equation}\label{V approximation}
    \hat{V}(\eta) = \hat{W}^{\top} \Phi(\eta).
    \end{equation}
    The derivative of \eqref{V approximation} based on the estimated weight is
    \begin{equation}\label{approximation dV approximation}
    \nabla \hat{V}(\eta) = \nabla \Phi^{\top}(\eta)\hat{W}.
    \end{equation}
    Based on the estimated critic weight, \eqref{LIP Lyapunov equation} is rewritten as
    \begin{equation}\label{approximation LIP Lyapunov equation}
    \hat{\Theta} = -\hat{W}^{\top}Y.
    \end{equation}
    Denoting $\Tilde{W} = \hat{W}- W^{*}$, the approximation error is written as 
    \begin{equation}\label{approximation error}
    \Tilde{\Theta} = \Theta - \hat{\Theta} = \Tilde{W}^{\top}Y+\epsilon_{h}.
    \end{equation}
    Let $\hat{W}$ be adapted to minimize the  squared residual error $E = \frac{1}{2} \Tilde{\Theta}^{\top}\Tilde{\Theta}$. Then, the weight estimation update law is redesigned as 
    \begin{equation} \label{w update law}
        \dot{\hat{W}} = - \Gamma k_c Y\Tilde{\Theta} -  \sum_{l=1}^{P} \Gamma k_{e} Y_l\Tilde{\Theta}_{l},
    \end{equation}
    where $\Gamma \in \mathbb{R}^{N \times N}$ is a constant positive definite gain matrix, $k_c, k_{e} \in \mathbb{R}_{+}$ are positive constant gains to trade off the relative importance between current data and experience data to the weight estimation update law. $P \in \mathbb{R}_{+}$ is the size of the experience buffers $\mathfrak{B}$ and $\mathfrak{E}$, i.e. the maximum number of data points recorded into the experience buffers. The regression matrix $Y_{l}\in \mathbb{R}^{N}$ and the approximation error $\Tilde{\Theta}_{l}\in \mathbb{R}$ denote the $l$th collected data of the experience buffer $\mathfrak{B}$ and $\mathfrak{E}$, respectively. 
    
    In order to analyse the weight convergence problem based on the weight estimation update law in \eqref{w update law}, a rank condition is firstly clarified in Assumption~\ref{rank condition}.
    \begin{assumption} \label{rank condition}
          Given an experience buffer $\mathfrak{B} = [Y^{\top}_{1},...,Y^{\top}_{P}] \in \mathbb{R}^{N \times P}$, where $Y_l$ is the $l$th collected experience data of $\mathfrak{B}$, there holds $rank(\mathfrak{B}\mathfrak{B}^{\top}) = N$.
    \end{assumption}
    
    According to Theorem 2 provided in \cite{li2020online}, the estimated weight of the critic learning agent is guaranteed to converge to its actual value based on the weight estimation update law \eqref{w update law}. 
    The control law can be derived directly based on the estimated critic weight as
    \begin{equation} \label{approximation u}
    \hat{\mu}(\eta)= - \frac{1}{2} R^{-1} G^{\top}(\eta)\nabla \Phi^{\top}(\eta)\hat{W}.
    \end{equation}
    Finally, we can get the control applied at the dynamics \eqref{original sys} as
    \begin{equation} \label{final u}
    u = \hat{\mu} + \nu = - \frac{1}{2} R^{-1} G^{\top}(\eta)\nabla \Phi^{\top}(\eta)\hat{W} + g^{+}(x_r) (y(x_r)-f(x_r)).
    \end{equation}
    The main conclusions of this paper are given as follows.
    \begin{theorem} \label{final theorem for optimal problem }
    For the dynamics given by \eqref{new system}, the weight estimation update law is given by \eqref{w update law}, and the approximated optimal control policy is in the form of \eqref{approximation u}. Assuming that Assumption \ref{bound of f g}-\ref{rank condition} are satisfied, parameters are chosen as details in the proof. If the number of activation functions is sufficiently large, the following properties holds:
    
    (i) The approximated control policy \eqref{approximation u} stabilizes the system \eqref{new system}, and the critic weight estimation error $\Tilde{W}$ are UUB.
    The prescribed performance \eqref{Performance function} for full states of the the system \eqref{original sys} achieves under the control policy \eqref{final u} during the tracking process.
    
    (ii) The approximated optimal control $\hat{\mu}$ in \eqref{approximation u} converges to a small neighbourhood around the optimal control policy \eqref{optimal u} with the bound $\left\| \hat{\mu}-\mu\right\| \leq \epsilon_{u}$ given in \eqref{u-u* abs}.
    \end{theorem}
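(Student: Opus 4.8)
The plan is to establish both claims from a single composite Lyapunov analysis and then read off the control bound algebraically. For part (i) I would take
\[
L(\eta,\tilde{W}) = V^{*}(\eta) + \tfrac{1}{2}\tilde{W}^{\top}\Gamma^{-1}\tilde{W}
\]
as the candidate. The first term is admissible as a barrier-type Lyapunov function for the plant: by the construction of $P(\eta)$ in \eqref{performance penalty function}, $V^{*}$ is positive definite in $\eta$ and grows without bound as any $\zeta_{i}\to\alpha_{i}$, inheriting the barrier property of the BLF in \cite{tee2009barrier}. The second term penalizes the critic weight error $\tilde{W}=\hat{W}-W^{*}$, with $\Gamma$ the gain from \eqref{w update law}.

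First I would differentiate $V^{*}$ along the closed loop \eqref{new system} driven by $\hat{\mu}$, writing $\dot{V}^{*}=\nabla V^{*\top}(F+G\hat{\mu})$ and adding and subtracting the optimal input $\upsilon^{*}$ of \eqref{optimal u}. Invoking the HJB equation \eqref{HJB equation expansion} cancels the dominant terms and leaves $\dot{V}^{*}$ equal to $-P(\eta)-\upsilon^{*\top}R\upsilon^{*}$ plus a cross term proportional to $\upsilon^{*\top}R(\hat{\mu}-\upsilon^{*})$ that, via \eqref{optimal dV approximation}, is linear in $\tilde{W}$ and in the residual $\nabla\epsilon$. The negative definite part supplies the stabilizing contribution while the cross term is carried forward.

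Next I would differentiate the weight-error term. Substituting the update law \eqref{w update law} together with the residual identity \eqref{approximation error}, $\tilde{\Theta}=\tilde{W}^{\top}Y+\epsilon_{h}$, the current-data term contributes $-k_{c}\tilde{W}^{\top}YY^{\top}\tilde{W}$ and the experience-data term contributes $-k_{e}\tilde{W}^{\top}\bigl(\sum_{l=1}^{P}Y_{l}Y_{l}^{\top}\bigr)\tilde{W}=-k_{e}\tilde{W}^{\top}\mathfrak{B}\mathfrak{B}^{\top}\tilde{W}$. Here Assumption \ref{rank condition} is decisive: since $\mathfrak{B}\mathfrak{B}^{\top}$ has full rank $N$, it is positive definite with $\lambda_{\min}(\mathfrak{B}\mathfrak{B}^{\top})>0$, so this term is negative definite in $\tilde{W}$ \emph{without} any persistence-of-excitation condition. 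Bounding the $\epsilon_{h}$ and cross terms by $b_{\epsilon_{h}}$ and the bounds of Assumption \ref{bound of NN issues} through Young's inequality, I would obtain $\dot{L}\le -c_{1}\lVert\eta\rVert^{2}-c_{2}\lVert\tilde{W}\rVert^{2}+c_{3}$, where positivity of $c_{1},c_{2}$ fixes the admissible gains $k_{c},k_{e},\Gamma,R$ (the ``details in the proof'') and $c_{3}$ collects the approximation-error bounds. This gives UUB of $(\eta,\tilde{W})$. The crucial step for the performance claim is to argue forward invariance of the safe region rather than mere asymptotic boundedness: because $V^{*}$, hence $L$, blows up as $\eta$ approaches the PPF boundaries, taking $N$ large enough (so $c_{3}$ is small) and $\eta(0)$ feasible forces $\dot{L}<0$ near the boundary, so no $\zeta_{i}$ ever reaches $\alpha_{i}$ and $\lvert e_{i}\rvert<\alpha_{i}\rho_{i}$ holds for \emph{all} $t$, establishing \eqref{Performance function}.

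For part (ii), once $\lVert\tilde{W}\rVert$ is ultimately bounded the result is pure algebra: subtracting \eqref{optimal u} from \eqref{approximation u} and using \eqref{optimal dV approximation} gives $\hat{\mu}-\upsilon^{*}=-\tfrac{1}{2}R^{-1}G^{\top}(\nabla\Phi^{\top}\tilde{W}+\nabla\epsilon)$; taking norms and inserting $\lVert G\rVert\le b_{G}$ from Assumption \ref{bound of f g} and $\lVert\nabla\Phi\rVert\le b_{\Phi\eta}$, $\lVert\nabla\epsilon\rVert\le b_{\epsilon\eta}$ from Assumption \ref{bound of NN issues} yields the explicit bound $\epsilon_{u}$ in \eqref{u-u* abs}. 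The main obstacle throughout is the forward-invariance argument of part (i): UUB by itself does not forbid a transient excursion across the barrier, so the proof must tie the smallness of the learning residual $c_{3}$ to the initial barrier value to guarantee the prescribed performance over the entire horizon, which is precisely where the off-policy, PE-free update law matters.
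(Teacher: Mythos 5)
Your proposal follows essentially the same route as the paper: the identical composite Lyapunov candidate $V^{*}(\eta)+\tfrac{1}{2}\tilde{W}^{\top}\Gamma^{-1}\tilde{W}$, the same split into a value-function term (reduced via the HJB/Lyapunov-equation substitution to $-P(\eta)$ plus bounded residuals) and a weight-error term made negative definite by the rank condition on $\mathfrak{B}\mathfrak{B}^{\top}$ (the paper's $X=\sum_{l}k_{e}Y_{l}Y_{l}^{\top}$), the same barrier-blow-up argument for prescribed performance, and the same algebra for part (ii). If anything, your forward-invariance discussion is more careful than the paper's, which simply asserts $\dot{V}<0$ and hence $V(t)<V(0)$ without acknowledging that negativity only holds outside the residual set $\tilde{\Omega}_{\tilde{W}}$, a gap your tie between the residual size and the initial barrier value is designed to close.
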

\begin{proof} Proof of (i). 
    Considering the following Lyapunov function candidate
    \begin{equation} \label{Lya function stability}
        V = V^{*}(\eta) + \frac{1}{2} \Tilde{W}^{\top} \Gamma^{-1} \Tilde{W}. 
    \end{equation}
    Taking time derivative of \eqref{Lya function stability} along the system \eqref{new system} yields
        \begin{equation} \label{stability dV}
        \dot{V} = \dot{V}^{*}(\eta) + \Tilde{W}^{\top} \Gamma^{-1} \dot{\hat{W}} = \dot{L}_{v}+\dot{L}_{w}.
        \end{equation}
        As for the first term $\dot{L}_{v}$
        \begin{equation} \label{stability dLv}
        \begin{aligned}
            \dot{L}_{v} &= \nabla {V^*}^{\top}(F(\eta)+G(\eta)\hat{\mu}) \\ 
            &={W^{*}}^{\top}\nabla \Phi F(\eta) - \frac{1}{2}{W^{*}}^{\top}\nabla \Phi G(\eta)R^{-1}G^{\top}(\eta)\nabla \Phi^{\top}\hat{W}\\
            &+\nabla\epsilon(\eta)(F(\eta)-\frac{1}{2} G(\eta)R^{-1}G^{\top}(\eta)\nabla \Phi^{\top}\hat{W}.
        \end{aligned}
    \end{equation}
    For simplicity, denoting $\mathcal{G} = \nabla \Phi G(\eta)R^{-1}G^{\top}(\eta)\nabla \Phi^{\top}$, it is assumed to be bounded as $\left\|\mathcal{G}\right\| \leq b_{\mathcal{G}} = b^{2}_{\Phi \eta} b^{2}_{G}/ \left\| R^{-1}\right\|$; Let $\epsilon_1 = \nabla\epsilon(\eta)(F(\eta)-\frac{1}{2} G(\eta)R^{-1}G^{\top}(\eta)\nabla \Phi^{\top}\hat{W})$ that is bounded as $\left\|\epsilon_1\right\| \leq b_{\epsilon_1}$.
    Then, \eqref{stability dLv} is rewritten as 
        \begin{equation} \label{stability dLv 2}
        \begin{aligned}
            \dot{L}_{v}  ={W^{*}}^{\top}\nabla \Phi F(\eta) - \frac{1}{2}{W^{*}}^{\top} \mathcal{G}W^{*}-
            \frac{1}{2}{W^{*}}^{\top} \mathcal{G} \Tilde{W} + \epsilon_1.
        \end{aligned}
    \end{equation}
    According to \eqref{approximation Lyapunov equation}, the following equation establishes
    \begin{equation} \label{stability dLv 3}
        \begin{aligned}
            &{W^{*}}^{\top}\nabla \Phi F(\eta) - \frac{1}{2}{W^{*}}^{\top} \mathcal{G}W^{*}\\
            &= -r(\eta,\mu(\eta)) + \epsilon_{h}\\
            &= -P(\eta) - \mu^{\top}R\mu+\epsilon_{h}\\
            & = -P(\eta)-\frac{1}{4}{W^{*}}^{\top}\mathcal{G}W^{*}+ \epsilon_{h}.
        \end{aligned}
    \end{equation}
    Finally, we can get 
        \begin{equation} \label{stability dLv 4}
            \dot{L}_{v} = -P(\eta) -\frac{1}{4}{W^{*}}^{\top}\mathcal{G}W^{*}-\frac{1}{2}{W^{*}}^{\top} \mathcal{G}\Tilde{W}+\epsilon_{h}+\epsilon_{1}.
    \end{equation}
    As for the second term $\dot{L}_{w}$, based on \eqref{w update law},
    \begin{equation} \label{stability dLw}
        \begin{aligned}
            \dot{L}_{w} &= \Tilde{W}^{\top} \Gamma^{-1}(- \Gamma k_c Y\Tilde{\Theta} - \Gamma  \sum_{l=1}^{P}k_{e}Y_{l}\Tilde{\Theta}_{l})\\
                &= -k_c \Tilde{W}^{\top}Y\Tilde{\Theta} -\Tilde{W}^{\top} \sum_{l=1}^{P}k_{e} Y_{l}\Tilde{\Theta}_{l}\\
                & = -k_c \Tilde{W}^{\top}Y(\Tilde{W}^{\top}Y+\epsilon_{h}) - \Tilde{W}^{\top} \sum_{l=1}^{P} k_{e} Y_{l}(\Tilde{W}^{\top}Y_{l}+\epsilon_{h_{l}})\\
                &=  - k_c \Tilde{W}^{\top} YY^{\top} \Tilde{W} - \Tilde{W}^{\top} \sum_{l=1}^{P}k_{e} Y_{l}Y^{\top}_{l}\Tilde{W} \\
                & +\Tilde{W}^{\top}(-k_c Y\epsilon_{h}-\sum_{l=1}^{P}k_{e} Y_{l}\epsilon_{h_{l}})\\
                &\leq - \Tilde{W}^{\top}\sum_{l=1}^{P}k_{e} Y_{l}Y^{\top}_l\Tilde{W} +\Tilde{W}^{\top}(-k_c Y\epsilon_{h}-\sum_{l=1}^{P}k_{e} Y_{l}\epsilon_{h_{l}}).\\
                &= - \Tilde{W}^{\top}X\Tilde{W}-\Tilde{W}^{\top}\epsilon_{er},
        \end{aligned}
    \end{equation}
    where $X= \sum_{l=1}^{P}k_{e} Y_{l}Y^{\top}_l$, $\epsilon_{er} =k_c Y^{\top}\epsilon_{h}+\sum_{l=1}^{P}k_{e} Y^{\top}_l \epsilon_{h_{l}}$ which is bounded by $\left\|\epsilon_{er}\right\| \leq b_{\epsilon_{er}}$.
    
    Finally, substituting \eqref{stability dLv 4} and \eqref{stability dLw} into \eqref{stability dV}, based on the fact that $\left\|W^{*}\right\| \leq b_{W^{*}}$, we can get
        \begin{equation} \label{stability dLv final}
        \begin{aligned}
            \dot{L}_{v} &= -P(\eta) -\frac{1}{4}{W^{*}}^{\top}\mathcal{G}W^{*}-\Tilde{W}^{\top}X\Tilde{W}\\
            &+\Tilde{W}(-\epsilon_{er}-\frac{1}{2}{W^{*}}^{\top} \mathcal{G})+\epsilon_{h}+\epsilon_{1} \\
            & \leq -P(\eta) -\frac{1}{4}{W^{*}}^{\top}\mathcal{G}W^{*} - \lambda_{min}(X)\left\|\Tilde{W}\right\|^2\\
            &+(b_{\epsilon_{er}}+1/2b_{\mathcal{G}}b_{W^{*}})\left\|\Tilde{W}\right\|+b_{\epsilon_{h}}+b_{\epsilon_{1}}\\
            & = -\mathcal{A}-\mathcal{B}\left\| \Tilde{W}\right\|^2+\mathcal{C}\left\| \Tilde{W}\right\|+\mathcal{D},
        \end{aligned}
    \end{equation}
    where $\mathcal{A} = P(\eta) +\frac{1}{4}{W^{*}}^{\top}\mathcal{G}W^{*}$ is positive, $\mathcal{B}  = \lambda_{min}(X)$, $\mathcal{C} = b_{\epsilon_{er}}+1/2b_{\mathcal{G}}b_{W^{*}} $ and $\mathcal{D} = b_{\epsilon_{h}}+b_{\epsilon_{1}}$.
    
    Since $\mathcal{A}$ is positive definite, the above Lyapunov derivative is negative if
    \begin{equation} \label{negative condition}
        \begin{aligned}
        \left\| \Tilde{W} \right\| > \frac{\mathcal{C}}{2\mathcal{B}}+\sqrt{\frac{\mathcal{C}^2}{4\mathcal{B}^2}+\frac{\mathcal{D} }{\mathcal{B}}}.
        \end{aligned}
    \end{equation}
    
    Thus, the critic weight estimation error converges to the residual set defined as
    \begin{equation} \label{compact set}
        \begin{aligned}
        \Tilde{\Omega}_{\Tilde{W}} = \{\Tilde{W} | \left\| \Tilde{W} \right\| \leq \frac{\mathcal{C}}{2\mathcal{B}}+\sqrt{\frac{\mathcal{C}^2}{4\mathcal{B}^2}+\frac{\mathcal{D} }{\mathcal{B}}} \}.
        \end{aligned}
    \end{equation}
    Denoting $V(0)$ as the value of the Lyapunov function candidate $V$ at $t = 0$, it is a bounded function determined by initial values. According to the above derivation, $\dot{V} <0$ establishes, which means that $\forall t$, $V(t) < V(0)$ always establishes, i.e. $V(t)$ is a bounded function at any time. The boundness of $V(t)$ implies that prescribed performance related constraints will not be violated. Otherwise, $V(t) \to \infty$ if any  constraint violation happens. Thus, we can conclude that the prescribed performance of full states achieves.
    
     Proof of (ii).  The difference between the approximated optimal control and optimal control follows
    \begin{equation}\label{u-u* abs}
        \begin{aligned}
          &\left\|\hat{\mu}(\eta)-\mu(\eta) \right\| \\
         & \leq\left\|-\frac{1}{2} R^{-1} G^{\top}(\eta)\nabla \Phi^{\top}(\eta)\tilde{W} + \frac{1}{2} R^{-1} G^{\top}(\eta)\nabla \Phi^{\top} \nabla \epsilon \right\| \\
          & \leq -\frac{1}{2} b_{G}b_{\nabla \eta}\left\|R^{-1}\right\| \left\|\Tilde{W}\right\|+\frac{1}{2}\left\|R^{-1}\right\| b_{G}b_{\Phi \eta}b_{ \epsilon \eta}=\epsilon_{u}.
        \end{aligned}
    \end{equation}
\end{proof}

\section{Simulation results}
A 2-DoF robot manipulator is chosen to show the effectiveness of the proposed control method. The Euler-Lagrange (E-L) model is given as
\begin{equation}\label{2DoF model}
M\Ddot{q}+C\dot{q}+F_d\dot{q}+F_s = \tau.
\end{equation}
where $q \in \mathbb{R}^2$, $\dot{q} \in \mathbb{R}^2$ and $\Ddot{q} \in \mathbb{R}^2$ are the vectors of joint angles, velocities, and accelerations respectively; $M=\begin{bmatrix}
   m_{11}  & m_{12} \\
    m_{12} & m_{22} \\
\end{bmatrix} $ is the inertia matrix with $m_{11}=p_{1}+2p_{3}c_2$,  $m_{12} =p_{2}+p_{3}c_2$, and $m_{22} = p_{2}$; $C = \begin{bmatrix}
   c_{11}  & c_{12} \\
    c_{21} & c_{22} \\
\end{bmatrix}$ is the matrix of centrifugal and Coriolis terms with $c_{11} = -p_3 s_2 \dot{q}_2$, $c_{12} = -p_3 s_2(\dot{q}_1+\dot{q}_2)$, $c_{21} = p_3 s_2 \dot{q}_1$, and $c_{22}=0$; $F_d = \diag{f_{d1},f_{d2}}$ stands for the dynamic friction; $F_s = [f_{s1}\tanh{(\dot{q}_1)},f_{s2}\tanh(\dot{q}_2)]^{\top}$ denotes the static friction. The explicit values for the robot dynamics are set as $p_1 = 3.4743$, $p_2 = 0.196$, $p_3 = 0.242$, $f_{s1} = 8.45$, $f_{s2} = 2.35$, $f_{d1} = 5.3$, $f_{d2} = 1.1$.
The E-L equation \eqref{2DoF model} can be written in the form of \eqref{original sys} by setting $x=[x_1,x_2,x_3,x_4]=[q_1,q_2,\dot{q}_1,\dot{q}_2]^{\top}$,
$f(x) = [x_3,x_4,(M^{-1}(-C-F_d)[x_3, x_4]^{\top}-F_s)^{\top}]^{\top}$, and $g(x) = [[0,0]^{\top},[0,0]^{\top},(M^{-1})^{\top}]^{\top}$. 
For simulation, the reference trajectory is set as $x_r(t) = [0.5\cos{(2t)},\cos{t},-\sin{(2t)},-\sin{(t)}]^{\top}$. Then, we get $\dot{x}_r = y(x_r) = [-\sin{(2t)},-\sin{(t)},-2\cos{(2t)},-\cos{(t)}]$, $g_r^{+}=[[0,0]^{\top},[0,0]^{\top},M^{\top}(x_r)]^{\top}$.
\subsection{OTCP case}\label{Simulation case OPT}
In this section, the effectiveness of the proposed off policy adaptive critic learning architecture illustrated in Section \ref{OPRL control strategy} is tested to tackle the common OTCP. The robot manipulator \eqref{2DoF model} is driven to track the reference trajectory $x_r$ while minimizing the common quadratic cost function 
\begin{equation}\label{common cost function optimal tracking}
    V(\eta) = \int_{0}^{\infty} e^{\top} Q e  + \mu^{\top}R\mu \,dt,
\end{equation}
where $Q = \diag{[8,8,8,8]}$, $R =1$. The basis set $\Phi(\eta) \in \mathbb{R}^{23}$  is chosen as
\begin{equation}\label{basise set 2DoF}
\begin{aligned}
        \Phi(\eta) = \frac{1}{2} & [\eta^2_1,\eta^2_2,2\eta_1\eta_3,2\eta_1\eta_4,2\eta_2\eta_3,2\eta_2\eta_4,\eta^2_1\eta^2_2,\eta^2_1\eta^2_5,\\
    &\eta^2_1\eta^2_6,\eta^2_1\eta^2_7,\eta^2_1\eta^2_8,\eta^2_2\eta^2_5,\eta^2_2\eta^2_6,\eta^2_2\eta^2_7,\eta^2_2\eta^2_8,\eta^2_3\eta^2_5,\\  
    &\eta^2_3\eta^2_6,\eta^2_3\eta^2_7,\eta^2_3\eta^2_8,\eta^2_4\eta^2_5,\eta^2_4\eta^2_6,\eta^2_4\eta^2_7,\eta^2_4\eta^2_8]^{\top}.
\end{aligned}
\end{equation}
The size of the experience buffer is set as $P = 25$. For the weight estimation update law \eqref{w update law}, parameters are set as $k_e = 10$, $k_c = 100$, and $\Gamma = I_{23 \times 23}$. For simulation, the initial values are set as $x_0 = [0.4,1.1,0,0]^{\top}$, $\hat{W} = 0_{1\times 23}$. Simulation results for this typical OTCP is shown from Fig.\ref{fig of critic weight OPT} to Fig.\ref{fig of tracking error OPT}.

The critic weight convergence result is shown in Fig.\ref{fig of critic weight OPT}. We know that after $t=40 \mathrm{s}$, the convergence of the estimated critic weight $\hat{W}$ achieves without incorporating probing noises. 
\begin{figure}[!t]
    \centering
    \includegraphics[width=3.8in]{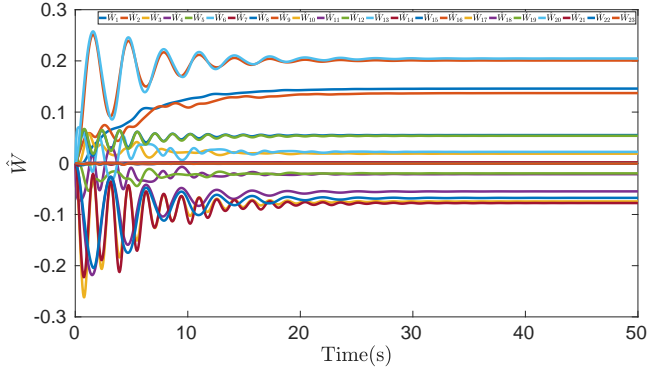}
    \caption{The weight convergence result of the critic learning agent $\hat{W}$ for the OTCP case.}
    \label{fig of critic weight OPT}
\end{figure}
The trajectories of full states and their references are displayed in Fig.\ref{state traj OPT}, and the trajectory of the tracking error $e$ is shown in Fig.\ref{fig of tracking error OPT}. It is concluded that the proposed control strategy enable the robot manipulator track the reference trajectory precisely.
\begin{figure*}[!t]
\centering
\subfloat[Trajectories of of $q_{1}$ and its reference $q_{1r}$]{\includegraphics[width=3.7in]{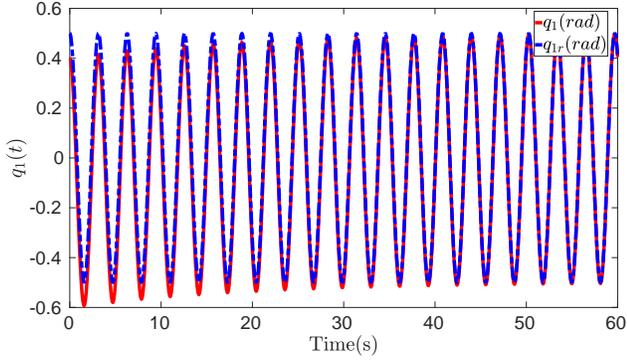}%
\label{q1 trajectory OPT}}
\subfloat[Trajectories of $q_{2}$ and its reference $q_{2r}$ ]{\includegraphics[width=3.7in]{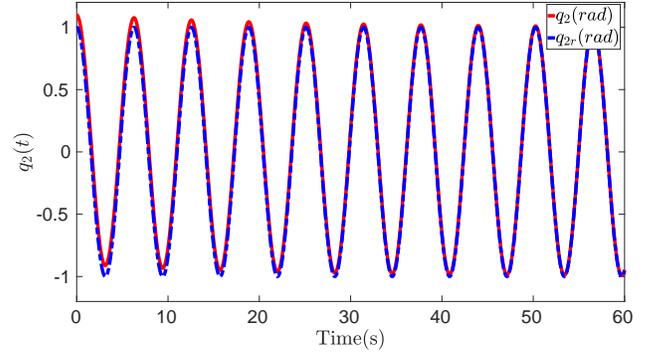}%
\label{q2 trajectory OPT}}

\subfloat[Trajectories of  $\dot{q}_{1}$ and its reference $\dot{q}_{1r}$ ]{\includegraphics[width=3.7in]{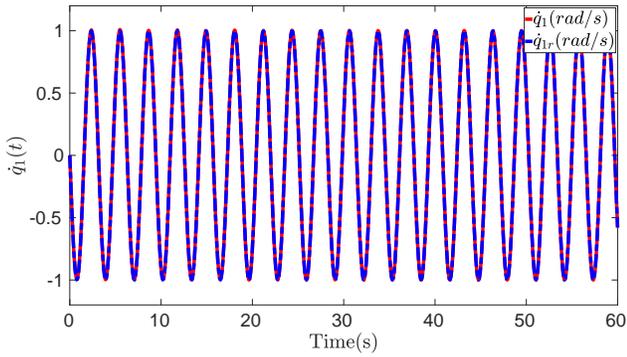}%
\label{qd1 trajectory OPT}}
\subfloat[Trajectories of of $\dot{q}_{2}$ and its reference $\dot{q}_{2r}$ ]{\includegraphics[width=3.7in]{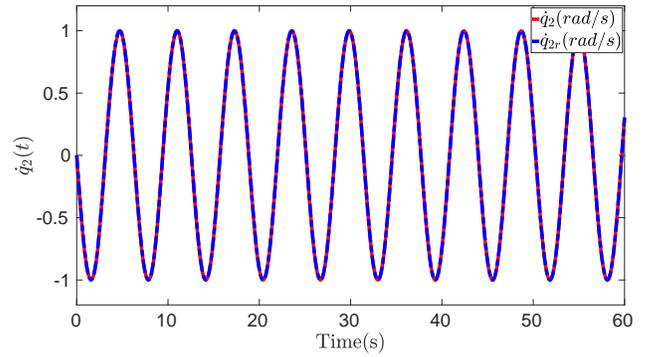}%
\label{state trajectory OPT}}
\caption{The trajectories of full states and their references for the OTCP case.}
\label{state traj OPT}
\end{figure*}
\begin{figure}[!t]
    \centering
    \includegraphics[width=3.8in]{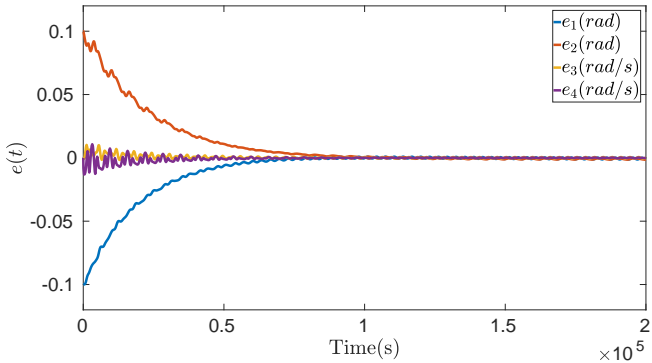}
    \caption{The trajectories of tracking errors for the OTCP case}
    \label{fig of tracking error OPT}
\end{figure}
\subsection{PP-OTCP case}
In this part, the effectiveness of off policy RL based method to solve the PP-OTCP illustrated in Problem \ref{PPTC} is tested. For simulation, PPFs for full states are set as 
\begin{equation} \label{PF sim}
\begin{aligned}
     \rho_i(t) = (60\pi/180 - 3\pi/180) e^{-0.1t}+3\pi/180, i = 1,2,3,4
\end{aligned}
\end{equation}
To achieve optimal trajectory tracking control with PPs, the cost function is designed as
\begin{equation}\label{PP cost fuction}
    V(\eta) = \int_{t}^{\infty} \sum_{i=1}^{4} k_{i} \log \frac{\alpha^2_{i}}{\alpha^2_{i}-\zeta^2_{i}}+h_{i} \log \frac{\beta^2_{i}}{\beta^2_{i}-\delta^2_{i}} +\mu^{\top}R\mu \,d\tau
\end{equation}
where $k_1 = 1$, $\alpha_1 = 0.20$; $k_2 = 0.3$, $\alpha_2 = 0.25$; $k_3 = 1$, $\alpha_3 = 0.25$;  $k_4 = 1$, $\alpha_4 = 0.25$; $h_i =0.01,\beta_i=10, i =1,2,3,4$.

For a fair comparison, the parameters are set as same with the OTCP case in Section \ref{Simulation case OPT}.
The parameter convergence result is shown in Fig.\ref{fig of critic weight PPT}. After 50 seconds, parameter convergence result achieves.
\begin{figure}[!t]
    \centering
    \includegraphics[width=3.8in]{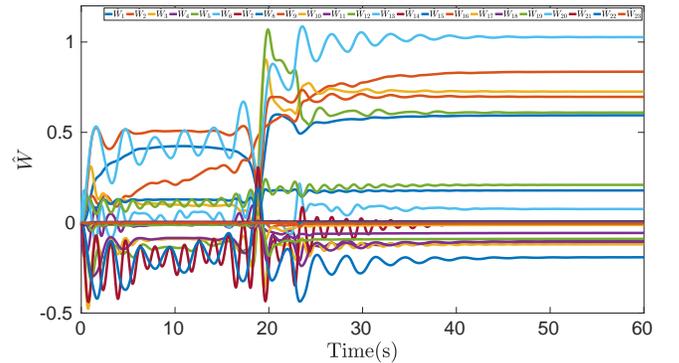}
    \caption{The weight convergence result of the critic learning agent $\hat{W}$ for the PP-OTCP case.}
    \label{fig of critic weight PPT}
\end{figure}
In order to show the effectiveness of the proposed method to achieve PPs, the comparison results are displayed from Fig.\ref{fig of tracking error e1 PPT} to  Fig.\ref{fig of tracking error e4 PPT}. As shown in Fig.\ref{fig of tracking error e1 PPT} and Fig.\ref{fig of tracking error e2 PPT}, the trajectories of $e_1$ and $e_2$ based on the common quadratic cost function \eqref{common cost function optimal tracking} violate the boundaries of PPFs, while our proposed method can effectively drive the robot manipulator to track reference trajectory and satisfy the performance requirements defined by PPFs.
\begin{figure}[!t]
    \centering
    \includegraphics[width=3.8in]{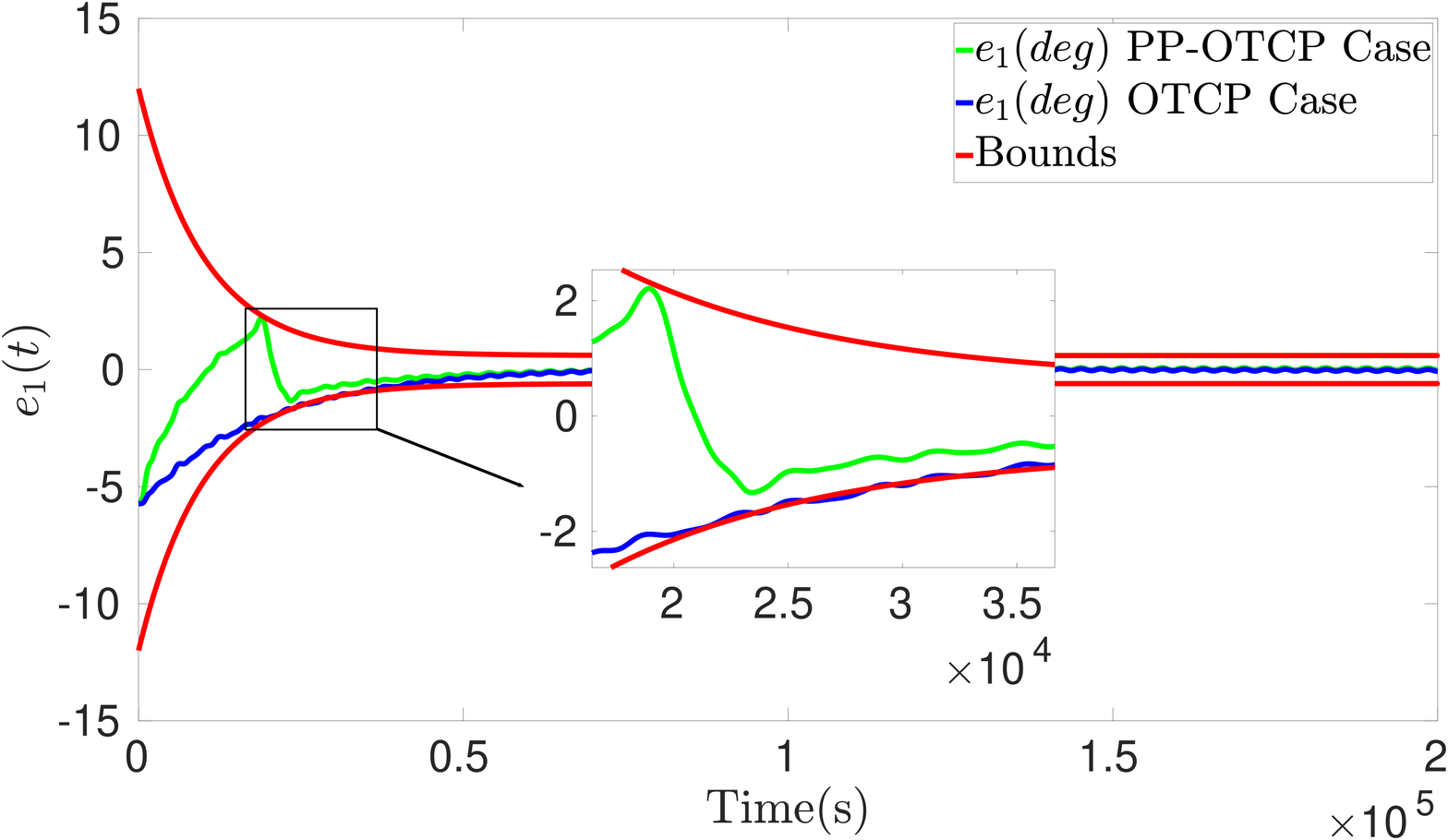}
    \caption{Comparison results of the tracking error $e_1$ }
    \label{fig of tracking error e1 PPT}
\end{figure}
\begin{figure}[!t]
    \centering
    \includegraphics[width=3.8in]{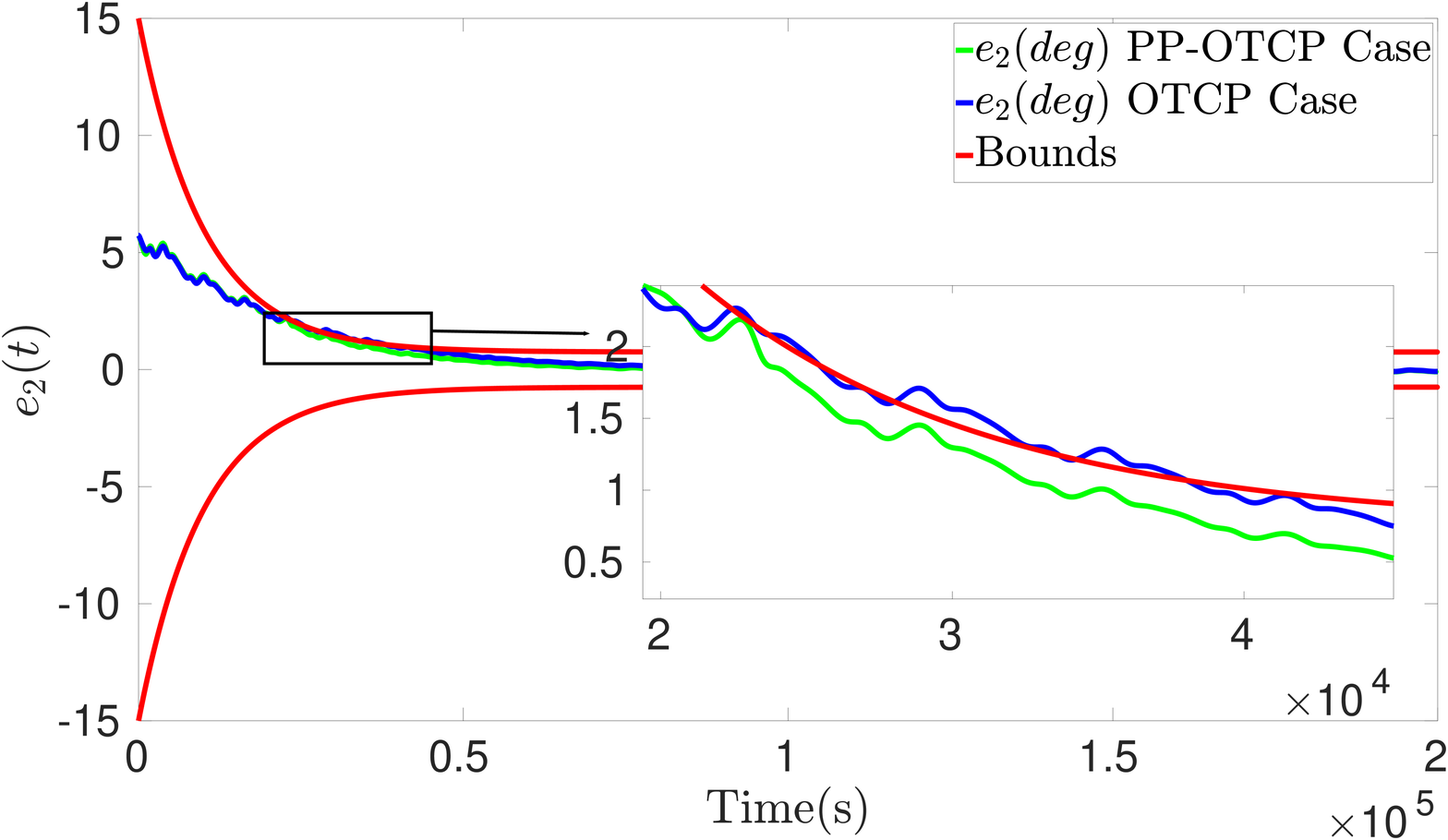}
    \caption{Comparison results of the tracking error $e_2$ }
    \label{fig of tracking error e2 PPT}
\end{figure}
\begin{figure}[!t]
    \centering
    \includegraphics[width=3.8in]{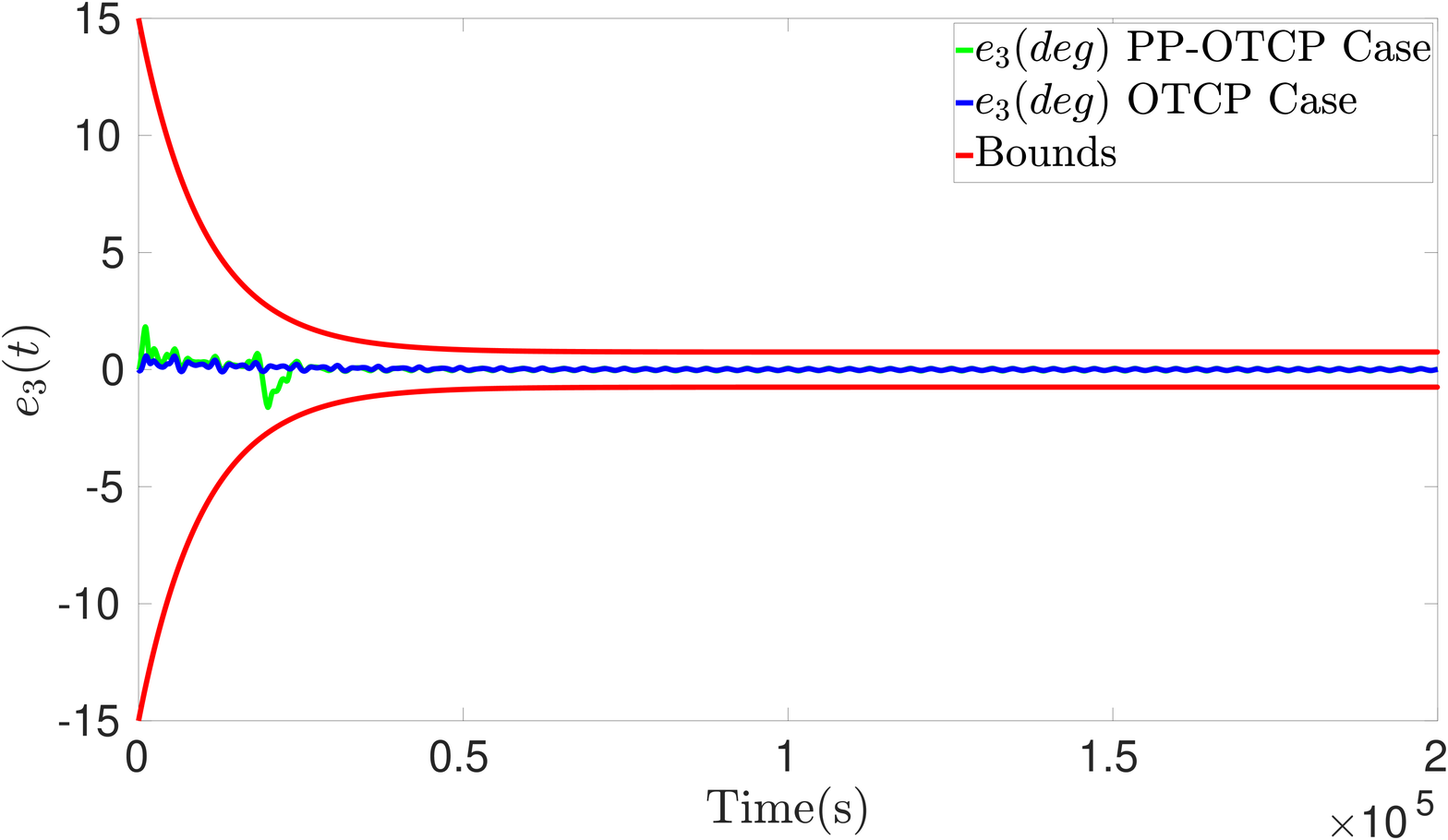}
    \caption{Comparison results of the tracking error $e_3$ }
    \label{fig of tracking error e3 PPT}
\end{figure}
\begin{figure}[!t]
    \centering
    \includegraphics[width=3.8in]{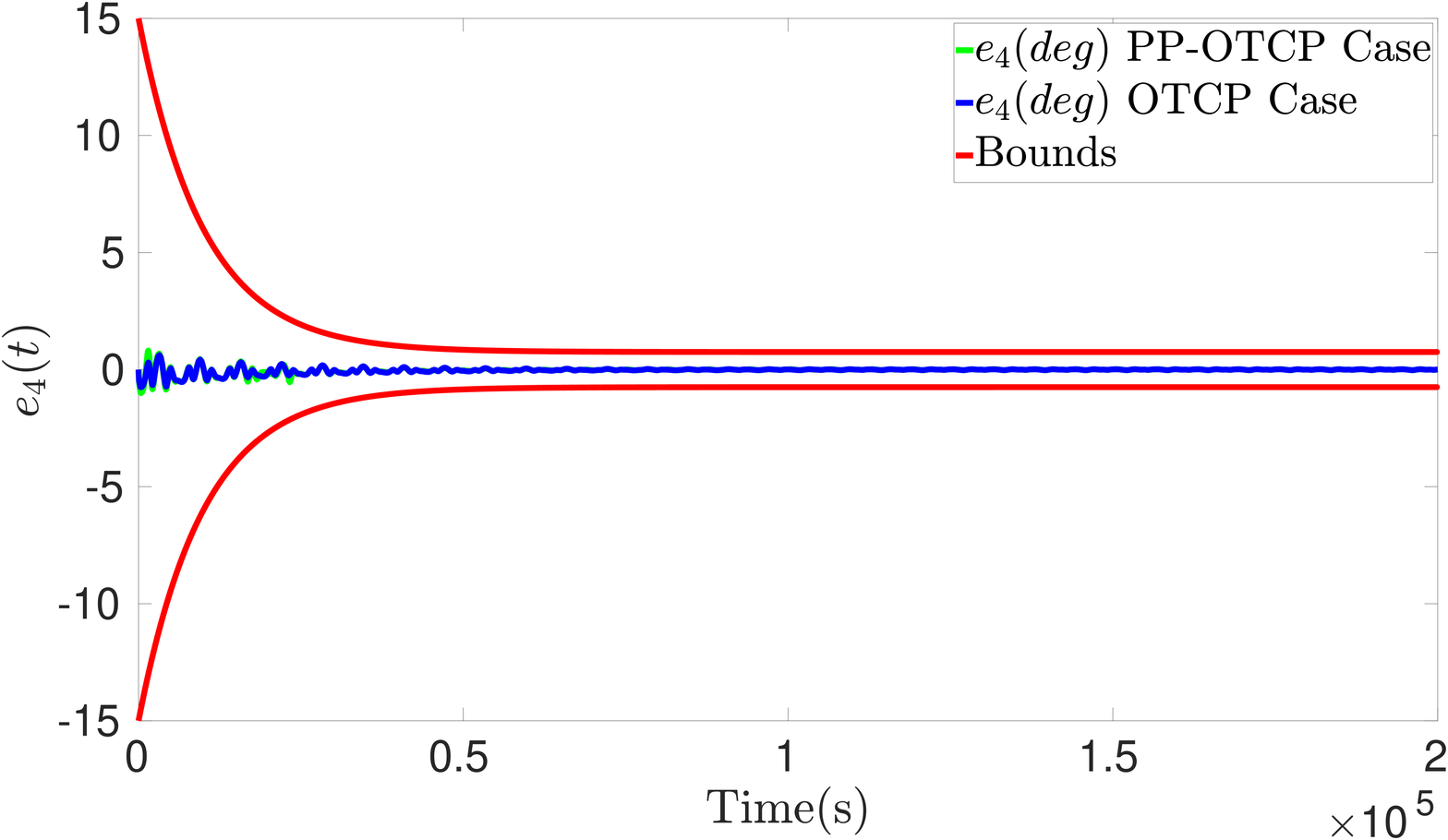}
    \caption{Comparison results of the tracking error $e_4$ }
    \label{fig of tracking error e4 PPT}
\end{figure}
\section{Conclusion}
To achieve prescribed performances for full states of the optimal tracking control problem, an off policy risk sensitive RL based control strategy is developed in this paper.
An auxiliary system is proposed to transform the optimal tracking control problem as an optimal regulation problem. 
The required prescribed performances are reflected by risk sensitive state penalty terms that are incorporated into the cost function of the transformed optimal regulation problem.
The HJB equation is approximately solved based on an off policy adaptive critic learning architecture, which achieves weight convergence without incorporating external signals to satisfy the PE condition. Simulation results have proved the effectiveness of the proposed strategy. In the future, experiments will be conducted to show the effectiveness of the proposed strategy on a 3-DoF robot manipulator. 

\bibliographystyle{IEEEtran}
\bibliography{bibtex/bib/IEEEexample}

\begin{thebibliography}{10}
\providecommand{\url}[1]{#1}
\csname url@samestyle\endcsname
\providecommand{\newblock}{\relax}
\providecommand{\bibinfo}[2]{#2}
\providecommand{\BIBentrySTDinterwordspacing}{\spaceskip=0pt\relax}
\providecommand{\BIBentryALTinterwordstretchfactor}{4}
\providecommand{\BIBentryALTinterwordspacing}{\spaceskip=\fontdimen2\font plus
\BIBentryALTinterwordstretchfactor\fontdimen3\font minus
  \fontdimen4\font\relax}
\providecommand{\BIBforeignlanguage}[2]{{%
\expandafter\ifx\csname l@#1\endcsname\relax
\typeout{** WARNING: IEEEtran.bst: No hyphenation pattern has been}%
\typeout{** loaded for the language `#1'. Using the pattern for}%
\typeout{** the default language instead.}%
\else
\language=\csname l@#1\endcsname
\fi
#2}}
\providecommand{\BIBdecl}{\relax}
\BIBdecl

\bibitem{bertsekas1995dynamic}
D.~P. Bertsekas, D.~P. Bertsekas, D.~P. Bertsekas, and D.~P. Bertsekas,
  \emph{Dynamic programming and optimal control}.\hskip 1em plus 0.5em minus
  0.4em\relax Athena scientific Belmont, MA, 1995, vol.~1, no.~2.

\bibitem{lewis2012optimal}
F.~L. Lewis, D.~Vrabie, and V.~L. Syrmos, \emph{Optimal control}.\hskip 1em
  plus 0.5em minus 0.4em\relax John Wiley \& Sons, 2012.

\bibitem{bechlioulis2010prescribed}
C.~P. Bechlioulis and G.~A. Rovithakis, ``Prescribed performance adaptive
  control for multi-input multi-output affine in the control nonlinear
  systems,'' \emph{IEEE Transactions on Automatic Control}, vol.~55, no.~5, pp.
  1220--1226, 2010.

\bibitem{bechlioulis2008robust}
------, ``Robust adaptive control of feedback linearizable mimo nonlinear
  systems with prescribed performance,'' \emph{IEEE Transactions on Automatic
  Control}, vol.~53, no.~9, pp. 2090--2099, 2008.

\bibitem{guo2018neural}
Q.~Guo, Y.~Zhang, B.~G. Celler, and S.~W. Su, ``Neural adaptive backstepping
  control of a robotic manipulator with prescribed performance constraint,''
  \emph{IEEE transactions on neural networks and learning systems}, vol.~30,
  no.~12, pp. 3572--3583, 2018.

\bibitem{huang2018neuro}
X.~Huang, Y.~Song, and J.~Lai, ``Neuro-adaptive control with given performance
  specifications for strict feedback systems under full-state constraints,''
  \emph{IEEE transactions on neural networks and learning systems}, vol.~30,
  no.~1, pp. 25--34, 2018.

\bibitem{yang2020adaptive}
Y.~Yang, Z.~Liu, H.~Xiong, and Y.~Yin, ``Adaptive singularity-free controller
  design of constrained nonlinear systems with prescribed performance,''
  \emph{Neurocomputing}, 2020.

\bibitem{dong2020optimal}
H.~Dong, X.~Zhao, and B.~Luo, ``Optimal tracking control for uncertain
  nonlinear systems with prescribed performance via critic-only adp,''
  \emph{IEEE Transactions on Systems, Man, and Cybernetics: Systems}, 2020.

\bibitem{wang2017dynamic}
M.~Wang and A.~Yang, ``Dynamic learning from adaptive neural control of robot
  manipulators with prescribed performance,'' \emph{IEEE Transactions on
  Systems, Man, and Cybernetics: Systems}, vol.~47, no.~8, pp. 2244--2255,
  2017.

\bibitem{kamalapurkar2015approximate}
R.~Kamalapurkar, H.~Dinh, S.~Bhasin, and W.~E. Dixon, ``Approximate optimal
  trajectory tracking for continuous-time nonlinear systems,''
  \emph{Automatica}, vol.~51, pp. 40--48, 2015.

\bibitem{li2020online}
C.~Li, F.~Liu, Z.~Zhou, and M.~Buss, ``Online single artificial neural network
  adaptive critic learning under additive disturbance, state constraints and
  input saturation,'' \emph{arXiv preprint arXiv:2006.05681}, 2020.

\bibitem{tee2009barrier}
K.~P. Tee, S.~S. Ge, and E.~H. Tay, ``Barrier lyapunov functions for the
  control of output-constrained nonlinear systems,'' \emph{Automatica},
  vol.~45, no.~4, pp. 918--927, 2009.

\bibitem{abu2005nearly}
M.~Abu-Khalaf and F.~L. Lewis, ``Nearly optimal control laws for nonlinear
  systems with saturating actuators using a neural network hjb approach,''
  \emph{Automatica}, vol.~41, no.~5, pp. 779--791, 2005.

\bibitem{finlayson2013method}
B.~A. Finlayson, \emph{The method of weighted residuals and variational
  principles}.\hskip 1em plus 0.5em minus 0.4em\relax SIAM, 2013, vol.~73.

\end{thebibliography}

\end{document}